\newtheorem{theorem}{Theorem}
\title{\LARGE \bf
Privacy-Preserving Distributed Average Consensus in Finite Time using Random Gossip
}
\author{Nicolaos E. Manitara, 
Apostolos~I.~Rikos, 
and Christoforos N. Hadjicostis 
\thanks{}%
\thanks{The work of N. E. Manitara and C. N. Hadjicostis has been supported in part by the European Commission's Horizon 2020 research and innovation programme under grant agreement No 739551 (KIOS CoE). Any opinions, findings, and conclusions, or recommendations expressed in this publication are those of the authors and do not necessarily reflect the views of EC.}
\thanks{Nicolaos E. Manitara and C. N. Hadjicostis are with the Department of Electrical and Computer Engineering and with KIOS Center of Excelence, University of Cyprus, Nicosia 1678, Cyprus. Emails: {\tt \{manitara.nicolas, chadjic\}@ucy.ac.cy}}%
\thanks{Apostolos~I.~Rikos is with the Division of Decision and Control Systems, KTH Royal Institute of Technology, SE-100 44 Stockholm, Sweden. E-mail: {\tt rikos@kth.se}}
}
\begin{document}

\maketitle
\thispagestyle{empty}
\pagestyle{empty}

\begin{abstract}
In this paper, we develop and analyze a gossip-based average consensus algorithm that enables all of the components of a distributed system, each with some initial value, to reach (approximate) average consensus on their initial values after executing a finite number of iterations, and without having to reveal the specific value they contribute to the average calculation. We consider a fully-connected (undirected) network in which each pair of components (nodes) can be randomly selected to perform pairwise standard gossip averaging of their values, and propose an enhancement that can be followed by each node that does not want to reveal its initial value to other (curious) nodes. We assume that curious nodes try to identify the initial values of other nodes but do not interfere in the computation in any other way; however, as a worst-case assumption, curious nodes are allowed to collaborate arbitrarily and are assumed to know the privacy-preserving strategy (but not the actual parameters chosen by the nodes that want to preserve their privacy). We characterize precisely conditions on the information exchange that guarantee privacy-preservation for a specific node. The protocol also provides a criterion that allows the nodes to determine, in a distributed manner (while running the enhanced gossip protocol), when to terminate their operation because approximate average consensus has been reached, i.e., all nodes have obtained values that are within a small distance from the exact average of their initial values.
\end{abstract}

\section{Introduction}\label{sec:introduction}
In distributed systems and networks, it is often necessary for all or some of the components (nodes) to calculate a function of certain parameters that we refer to as initial values. When all nodes calculate the average of their initial values, they are said to reach average consensus. Average consensus (and more generally consensus) has received a lot of attention from the control community due to its usage in various emerging applications, including wireless smart meters (where all nodes have to agree on the average power demand or consumption of the network \cite{2011:Christoforos_Garcia}), and multi-agent systems (where all agents communicate with each other in order to coordinate their direction or speed \cite {2005:Ren_Beard_Atkins}). Over the last few decades, a variety of distributed algorithms for calculating different functions of these initial values have been proposed by the control, communication, and computer science communities \cite{1996:Lynch,2003:Koetter,2005:Hromkovic,2008:Cortes}.

One approach to consensus is based on a linear iterative strategy, where each node in the network repeatedly updates its value to be a weighted sum of its own previous value and the values of its neighbors. If the network topology satisfies certain conditions, the weights for the linear iteration can be chosen such that all of the nodes in the network converge (asymptotically) to the same function of the initial values, which under additional conditions on the weights, can be the average \cite{2008:Sundaram_Hadjicostis, 2018:BOOK_Hadj}. Another popular approach to consensus is a gossip-based strategy \cite{2010:Dimakis_Rabbat} which allows the nodes to calculate the average value of the network without having to follow a predefined routing of the information that needs to be transmitted/received to/from other nodes in the network. At each iteration, information is exchanged between a randomly selected pair of neighboring nodes, and then this information is processed by both nodes to compute a local pairwise average. 

Both of the methods described above are asymptotic and typically do not consider privacy issues that may arise while the network is reaching consensus. This paper addresses the topic of privacy-preserving average consensus in finite time, which has received some recent attention. Specifically, an anonymization transform using random offsets on the initial values was proposed for a cooperative\footnote[1]{In a cooperative network, all the nodes follow the predefined strategy, without deviating in any way \cite{Kefayati:2007}.} wireless network in \cite{Kefayati:2007}. In this approach, each node that would like to protect its privacy (i.e., does not want to reveal its initial value) adds a random offset value to its initial value, ensuring in this way that its true initial value will not be exposed through the values exchanged in the network. Similarly, in the privacy-preserving strategies proposed in \cite{Nozari_2017} nodes inject uncorrelated noise into the exchanged messages so that the data associated to a particular node cannot be inferred by a curious node during the execution of the algorithm. 
However, note that the average value is not obtained exactly in \cite{Nozari_2017} leading to notions of differential privacy \cite{2016:CortesPappas}. 
The injection of correlated noise at each time step and for a finite period of time was proposed in \cite{2013:Nikolas_Hadj} and guarantees convergence to the exact average. 
In \cite{Mo-Murray:2017}, the nodes asymptotically subtract the initial offset values they added in the computation, while in \cite{west} each node masks its initial state with an offset such that the sum of the offsets of each node is zero, thus guaranteeing convergence to the average. 
In \cite{2020:RikosThemisJohHadj_CDC, 2021:Priv_Rikos_Hadj_Joh_Themis_TCNS}, the nodes inject an initial quantized offset to their initial values. 
During the operation of the algorithms, the nodes are able to subtract the initial offset values they added, which guarantees convergence in finite time to the exact average. 
In \cite{2019:Cascudo_Christensen} the nodes achieve privacy-preserving distributed average consensus using the principle of additive secret sharing where each node replaces its initial value with another obfuscated value by subtracting and adding random numbers.
Finally, another approach that guarantees privacy-preservation is via homomorphic encryption \cite{Hadjicostis_Garcia_2020, 2019:Ruan_Wang, Hadjicostis_2018}. 
However, this approach requires the existence of trusted nodes and imposes heavier computational requirements on the nodes. 

Our own previous work in \cite{2013:Nikolas_Hadj} describes a privacy-preserving protocol that is a variation of the standard linear iteration in \cite{2008:Sundaram_Hadjicostis} that is used in the absence of privacy requirements under undirected communication topologies using weights that form a doubly stochastic matrix. The main enhancement is that, at each time-step, each node following the protocol adds an arbitrary offset value and/or the result of its iteration, in an effort to avoid revealing  its own initial value or the initial values of other nodes. What is important is for each node to ensure that the total offset it adds cancels out in the end (the accumulated sum of offsets is zero). The work in \cite{2013:Nikolas_Hadj} establishes that under certain conditions on the communication topology, the protocol allows the nodes to calculate the average of their initial values in a privacy-preserving manner, despite the presence of curious agents. For directed topologies, inspired by ratio consensus strategies, we devised in \cite{2019:Charalambous_Manitara_Hadjicostis} a distributed mechanism, where each node updates its information state by combining the available information received from its in-neighbors using constant positive weights and by adding an offset to one of the two states communicated during the execution of the algorithm. As shown in \cite{2019:Charalambous_Manitara_Hadjicostis}, if after a finite number of iteration the nodes ensure that the total offset is zero, this privacy-preserving version of ratio consensus converges to the exact average of the nodes’ initial values, even in the presence of bounded time-varying delays.

The privacy-preserving gossip-based protocol we develop and analyze in this paper enables all of the nodes to calculate in finite time an $\epsilon$-close approximation of the average of their initial values, without loss of privacy despite the presence of multiple (possibly colluding) curious nodes. Curious nodes are assumed to have full knowledge of the proposed protocol and are allowed to collaborate arbitrarily among themselves, but do \emph{not} interfere in the computation of the average value of the network in any other way. Our approach does not depend on any cryptographic algorithm, but operates by allowing the nodes to introduce pseudo-random offsets (unknown to the curious nodes). Specifically, the proposed protocol enhances the standard gossip-based protocol \cite{2010:Dimakis_Rabbat} in two main directions. The first enhancement is that, at each time-step, the randomly chosen pair of nodes not only exchanges information and calculates the local (pairwise) average value, but if any of the two is nodes following the protocol, then that node also adds an arbitrary offset value to its updated (average) result, in an effort to avoid revealing its own initial value or the initial values of other nodes. What is important is for each node to ensure that the total offset added (accumulated sum of offsets) eventually becomes zero. We establish simple conditions to ensure that the proposed protocol allows the nodes to calculate the average of their initial values in a privacy-preserving manner, despite the presence of curious agents. In particular, the protocol guarantees that when there are at least two nodes following the protocol then privacy is guaranteed for both of them in the sense that their individual initial values are not revealed to the curious nodes. Note that it might still be possible for the curious nodes to determine exactly the sum of the initial values of nodes that follow the privacy-preserving protocol (but not their individual values). The second enhancement is a deterministic criterion that enables the nodes to determine, in a distributed manner, when to seize initiating gossip exchanges because approximate average consensus has been reached, i.e., it is shown that eventually all nodes will seize activity, and that point they will all have values that are within a small distance from the true average of their initial values.

The remainder of the paper is organized as follows. In Section~\ref{sec:background}, we provide some relevant background on graph theory and describe the linear and gossip-based iterative strategies for reaching average consensus in a given distributed system. In Section~\ref{sec:Problem Setup} we formulate the problem of interest. We describe our proposed privacy-preserving gossip protocol, and  the main results of the paper in Section~\ref{sec:Proposed_Strategy_Main_Results}. In Section~\ref{sec:Example} we present an example, and finish the paper with conclusions and directions for future work in Section~\ref{sec:conlcusion}.

\section{Background}\label{sec:background}
\subsection{Distributed System Model}

 \par In a distributed system we can model the network topology as a directed graph (digraph) \textit{G=\{X,E\}} where \textit{X=\{$1,2,...,N$\}} is the set of components in the system and  \textit{$E\subseteq X \times X-\{(i,i)\mid {i\in X}$}\} is the set of edges (self-edges excluded). 
 In particular, edge $(i,j)\in E$ if node $j$ can send information to node $i$. In this work, we focus on components that interact via bidirectional links in a way that forms a connected undirected (or symmetric) graph, i.e., a graph for which $(i,j)\in E$ if and only if  $(j,i)\in E$. The set of nodes that can exchange information with node \textit{$i$} are said to be the neighbors of node \textit{$i$} and are represented by the set $\mathcal{N}_i=\{(j,i)\in E\}$. 
 The number of neighbors of node \textit{i} is called the degree of node \textit{$i$} and is denoted as $\mathcal{D}_i$=$|\mathcal{N}_i|$.
 
 Our model deals with networks where information is transmitted/received between a randomly selected pair of nodes at each time step. More specifically, each node randomly wakes up and initiates an information exchange with a randomly selected neighbor. We assume that nodes are aware of the unique identity (id) of each node they communicate with and that, during the exchange process, all the information to/from a particular node is transmitted/received successfully \cite{1996:Lynch}. 
 Moreover, the nodes must have sufficient memory and computational capability in order to store and perform certain computations while the iteration is executing. 
 During the transmission/reception process, the nodes in the network receive a value from the selected node, and transmit their value back to the selected node.

\subsection{Average Consensus via Linear Iterative Strategy}

 In average consensus problems the objective is the calculation of the average of the initial values of the nodes in the network. 
 Assume that each node \textit{$i$} has some initial value   \textit{$x_i$} and, at each time-step $k$, it updates its value as a weighted sum of its own value and the values of its in-neighbors (e.g., following the method in \cite{2008:Sundaram_Hadjicostis}). 
 Specifically, at each time-step $k$, each node updates its value as
 \begin{eqnarray}\label{aver_iteration}
 x_i[k+1]=w_{ii}x_i[k]+\sum_{j\in \mathcal{N}_i} w_{ij} x_j[k],
\end{eqnarray}
where $w_{ij}$ are a set of (fixed) weights and $x_i[0]=x_i, \forall i\in X $. 
The values for all the nodes at time-step $k$ can be aggregated into the value vector $x[k]=\big[x_1[k]$ $x_2[k]$ $ ...$ $x_N[k]\big]^T$ and the update strategy for the entire network can be written compactly as
\[ x[k+1]=\underbrace{\left[
\begin{array}{cccc}
 w_{11} &w_{12} &\cdots &w_{1N} \\
 w_{21} &w_{22} &\cdots &w_{2N} \\
 \vdots  &\vdots  &\ddots &\vdots  \\
 w_{N1} &w_{N2} &\cdots &w_{NN}
\end{array} \right]}_W x[k], \]for $k\in\mathbb{N}$, where $w_{ij}=0$ if $x_j \notin \textit{$\mathcal{N}_i$} \cup {\{i\}}$.

\par$  $
The system is said to reach asymptotic consensus if $lim_{k\rightarrow\infty}x_i[k]=\textit{f}(x_1[0],x_2[0],...,x_N[0])$ for each \textit{i}, where $\textit{f}(x_1[0],x_2[0],...,x_N[0])\in \mathbb{R}$. When $\textit{f}(x_1[0],x_2[0],...,x_N[0])=c^Tx[0]$ for some column vector $c$ (where $c^T$ is the transpose of $c$), the following result by Xiao and Boyd from \cite{2004:XiaoBoyd} characterizes the conditions under which iteration \eqref{aver_iteration} achieves asymptotic consensus.
\begin{theorem} \cite{2004:XiaoBoyd} The iteration given by \eqref{aver_iteration} reaches asymptotic consensus on the linear functional $c^T x[0]$ (under the technical condition that $c$ is normalized so that $c^T \bf{1}=1$, where $\bf{1}$ is the all ones column vector of size $N$) if and only if the weight matrix $W$ satisfies the conditions below:
\begin{enumerate}
 \item $W$ has a simple eigenvalue at 1, with left eigenvector $c^T$ and right eigenvector $\bf{1}$=[1 1...1$]^T$;
   \item All other eigenvalues of $W$ have magnitude strictly less than 1.
   \end{enumerate}
In particular, if $c^T$=$\frac{1}{N}$[1 1...1], then average consensus is reached. Also note that if $w_{ij}$ are restricted to be nonnegative, then the above conditions are equivalent to $W$ being a doubly stochastic matrix.
\end{theorem}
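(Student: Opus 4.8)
The plan is to recast the coordinate-wise convergence requirement as a single statement about the matrix powers $W^k$ and then read the two conditions off the Jordan structure of $W$. Since $x[k] = W^k x[0]$, asymptotic consensus on $c^T x[0]$ for \emph{every} initial vector $x[0]$ is equivalent to $\lim_{k\to\infty} W^k = \mathbf{1} c^T$. Here $\mathbf{1} c^T$ is a rank-one idempotent, because $c^T \mathbf{1} = 1$ gives $(\mathbf{1} c^T)(\mathbf{1} c^T) = \mathbf{1}(c^T\mathbf{1})c^T = \mathbf{1} c^T$. Thus the whole theorem reduces to characterizing exactly when $W^k$ converges to this particular rank-one limit.

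For sufficiency I would write $W = S J S^{-1}$ in Jordan canonical form. Condition (1) forces the Jordan block at the eigenvalue $1$ to be $1\times 1$, so its $k$-th power stays equal to $1$, while condition (2) makes every remaining block correspond to an eigenvalue $\lambda$ with $|\lambda|<1$, whose powers decay because the entries of $J_\lambda^k$ are binomial coefficients times $\lambda^{k-j}$ and hence tend to $0$. Passing to the limit gives $W^k \to S\,\mathrm{diag}(1,0,\dots,0)\,S^{-1}$, which is precisely the spectral projection onto the eigenvalue-$1$ eigenspace along the complementary invariant subspace. I would then verify that, with the prescribed left eigenvector $c^T$, right eigenvector $\mathbf{1}$, and normalization $c^T\mathbf{1}=1$, this projection equals $\mathbf{1} c^T$, closing the direction.

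For necessity I would argue from the mere existence of a finite limit $L = \lim_{k\to\infty} W^k$. Convergence of $W^k$ forces the spectral radius to be at most $1$; any eigenvalue of modulus $1$ different from $1$ makes its modal term oscillate without converging, and a Jordan block of size at least $2$ at the eigenvalue $1$ yields superdiagonal entries growing like $k$, so both are excluded, giving condition (2) and semisimplicity of the eigenvalue $1$. Since $L=\mathbf{1} c^T$ has rank one and coincides with the spectral projection onto the eigenvalue-$1$ eigenspace, that eigenspace is one-dimensional, i.e., the eigenvalue $1$ is simple. Finally, passing to the limit in $W W^k = W^{k+1}$ and $W^k W = W^{k+1}$ gives $W L = L = L W$, which with $c\neq 0$ forces $W\mathbf{1}=\mathbf{1}$ and $c^T W = c^T$, identifying the eigenvectors. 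The averaging specialization is immediate once $c^T = \tfrac{1}{N}\mathbf{1}^T$, and in the nonnegative case the conditions reduce to reading $W\mathbf{1}=\mathbf{1}$ and $\mathbf{1}^T W = \mathbf{1}^T$ as the row- and column-sum constraints defining a doubly stochastic matrix.

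I expect the main obstacle to be the bookkeeping in the sufficiency direction: confirming that the projection produced by the Jordan limit is \emph{exactly} $\mathbf{1} c^T$ and not some other rank-one projection. This is the one place where the left/right eigenvector hypotheses and the normalization $c^T\mathbf{1}=1$ must be used jointly, and where one must ensure the left and right eigenvectors are paired correctly so that the idempotent comes out with the right orientation.
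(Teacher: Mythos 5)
The paper does not prove this theorem at all: it is quoted verbatim from the cited reference \cite{2004:XiaoBoyd}, so there is no in-paper proof to compare against. Your argument is the standard one (and essentially the one in the cited source): reduce consensus for all initial conditions to $\lim_{k\to\infty} W^k = \mathbf{1}c^T$, then read the two conditions off the Jordan structure, using the normalization $c^T\mathbf{1}=1$ to identify the limiting spectral projection with $\mathbf{1}c^T$. The sufficiency and necessity directions are both sound as sketched, including the step $WL = L = LW$ to recover the eigenvector conditions.

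One small caveat concerns the closing remark about nonnegative weights. Your reading of $W\mathbf{1}=\mathbf{1}$ and $\mathbf{1}^T W=\mathbf{1}^T$ as the row- and column-sum constraints only establishes that the spectral conditions imply double stochasticity; the converse is false as literally stated (a permutation matrix is doubly stochastic yet violates condition (2)). This looseness is inherited from the theorem statement itself, which implicitly assumes weights conforming to a connected graph with positive self-weights (so that $W$ is primitive); if you want the ``equivalence'' to hold you need to invoke Perron--Frobenius for primitive doubly stochastic matrices in the reverse direction, not just the row/column-sum identities.
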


\subsection{Average Consensus using a Randomized Gossip Strategy}
 In a randomized gossip protocol, a pair of nodes is randomly selected at each iteration and perform local (pairwise) averaging of the two values. 
Specifically, at iteration $k$, nodes $i$ and $j$ (if selected) update their values as
\begin{eqnarray}
\label{EQgossip}
x_i[k+1]=x_j[k+1]=\frac {x_i[k]+x_j[k]}{2},
\end{eqnarray} with $x_i[0]=x_i$~for all $i$. Typically, it is assumed that each node ``wakes up'' in an asynchronous manner and randomly selects a neighbor to perform local averaging with. Note that the remaining nodes do not update their values, i.e., $x_l[k+1]=x_l[k]$ for all nodes $l$, $l\neq j$ and $l \neq i$. 

It can be shown that, if nodes are following the above protocol, they are guaranteed to asymptotically reach the average of their initial values $\bar x=\frac{1}{N}\sum_{l=1}^{N} x_l[0]$, as long as the network is connected (so that all the available information can flow to every node in the network) and all pairs of nodes in the network are active frequently enough. 

\subsection{Previous Work on Privacy-Preserving Average Consensus}
\par Privacy-preserving average consensus in the presence of  curious agents in the network has already received significant attention in the literature (e.g., \cite{Kefayati:2007}, \cite{2013:Nikolas_Hadj}, \cite{2010:Dimakis_Rabbat}, \cite{Mo-Murray:2017} \cite{2016:CortesPappas}). Below, we briefly review the most relevant existing results for our work here.

As mentioned earlier, the authors of \cite{Kefayati:2007} proposed a transformation method using random offset values in a cooperative wireless network. Specifically, each node \textit{$i$} that wishes to protect its privacy adds a random offset value \textit{$u_i$} to its initial value \textit{$x_i$}. This ensures that its value will not be revealed to curious nodes that might be observing the exchange of values in the network. The idea is based upon the observation that, when a large number of nodes employ the protocol, their offsets will have a zero net effect on the average with high probability, allowing the nodes to converge to the true average of the initial values of the nodes in the network. Specifically, each node \textit{$i$} sets $x_i[0]$=\textit{$x_i$}$'$= \textit{$x_i$+$u_i$} where \textit{$u_i$}, \textit{$i=1,2,...,N,$} are i.i.d. random variables with zero mean. Then, following the protocol for asymptotic average consensus, the nodes converge to, \begin{eqnarray}
\frac{1}{N} \sum_{i=1}^{N} x_i'=\underbrace{\frac{1}{ N} \sum_{i=1}^{N} x_i}_{\overline{x}}+\underbrace{\frac{1}{N} \sum_{i=1}^{N} u_i}_U,
 \end{eqnarray}where $\overline{x}$ is the desirable average of the original initial values and $U$ is a random variable that captures the net effect of the offsets.
Clearly, $E[U]=0$ (since the $u_i$ are zero mean) and $var[U]=\frac{1}{N}var(U_i)$ (since the $u_i$ are i.i.d.).

For ${N\to \infty}$ we have $var(U)~\rightarrow 0$ which means that
\begin{eqnarray}
 \lim_{N \to \infty} \frac{1}{ N} \sum_{i=1}^{N} x'_i = \lim_{N \to \infty} \bigg[\frac{1}{ N} \sum_{i=1}^{N} x_i + \frac{1}{ N} \sum_{i=1}^{N} u_i \bigg] =\overline{x}+0=\overline{x},
 \end{eqnarray}
 where the last inequality is taken in the mean square sense.
For a large number of nodes $(N\rightarrow\infty)$, this method can give results very close to the true average of the network; however, as the number of nodes decreases, the accuracy of this method also decreases, due to the fact that the offset values added to the protected nodes will add a random offset (with mean zero and some finite variance) to the true average value of the network.

In our own previous work \cite{2013:Nikolas_Hadj}, the objective is also to calculate the average of the initial values of the nodes in the network, while at the same time preserving the privacy of the nodes following the protocol. The scheme that is used in \cite{2013:Nikolas_Hadj} assumes that the underlying network forms a connected undirected graph and makes use of linear iterations as in \eqref{aver_iteration} were the weights $w_{ij}$ form a doubly stochastic matrix $W=[w_{ij}]$ (thus, the nodes asymptotically reach consensus to the average of their initial values). The main difference is that node $i$ following the protocol sets its initial value $x'_i[0]=x_i+u_i$ (where $u_i$ is some random offset), and subsequently updates its value as
 \begin{eqnarray}
 x_i'[k+1]=w_{ii}[k]x_i'[k]+\sum_{j\in \mathcal{N}_i} w_{ij} x_j'[k] + u_i[k],\ k=0,1,...,
\end{eqnarray}
where $u_i[k]$ is a pseudo-random value chosen by node $i$ at time-step $k$. The constraint is that $u_i[k]=0$ for $k>L_i$ (for some $L_i$ known only to node $i$) and 
\begin{eqnarray}\label{last_offset}
u_i[L_i]=-\sum_{t=0}^{L_i-1}u_i[t]-u_i.
\end{eqnarray}
At time-step $L_i$, node $i$ effectively cancels-out the pseudo-random values it has added during the information exchange in the network up to that point.

The main contribution of \cite{2013:Nikolas_Hadj} is the establishment of topological conditions that ensure privacy for the nodes following the proposed protocol despite the presence of curious nodes in the network. Considering a fixed network with $N$ nodes described by a digraph $G=\{X,E\}$ and the iteration in \eqref{aver_iteration} with weights that form a primitive doubly stochastic weight matrix $W$, it assumes that a set of nodes $P$ follow the predefined privacy-preserving strategy in \eqref{EQgossip} with random offsets chosen as in \eqref{last_offset}. Curious node $i$ will not be able to identify the initial value $x_j$ for $j\in{P}$, as long as $j$ has at least one other node $l$ connected to it for which all paths from $l$ to the curious node $i$ are through a node $j'$ following the protocol (i.e., $j'\in P$). Specifically, if the above mentioned condition is satisfied, the network will reach average consensus and the privacy of the initial values of the nodes following the protocol will be preserved during the linear iteration process.

\section {Problem Setup}\label{sec:Problem Setup}
Consider a network with $N$ nodes each with some initial value (node $i$ has initial value $x_i$). 
Let us assume that all nodes follow a gossip-based strategy for asymptotically reaching agreement to the average of their initial values. 
However, we consider that (i) a set of curious nodes $X_c \subset X$ try to identify the initial values of all or some of the nodes in the network, (ii) a set of private nodes $X_p \subset X$ would like to preserve their privacy by not revealing to other nodes their initial values, and (iii) a set of neutral nodes $X_n \subset X$ are neither curious nor try to preserve the privacy of their initial values (we allow some nodes not to be curious and not follow the privacy-preserving protocol in order to investigate the worst-case scenario that this protocol can handle). 
In this paper we describe a distributed protocol that allows each node $i$ to calculate the average $ {\overline x}$, while every node $i \in X_p$ is able to preserve the privacy of its initial value from possibly colluding curious nodes in $X_c$. 
Furthermore, once values close to $\overline{x}$ obtained for all nodes, each node is able to determine when to terminate its operation in a distributed manner using a deterministic criterion. 
More specifically when the nodes terminate their operation, we are guaranteed that approximate average consensus has been reached, i.e., all nodes have obtained values that are within a small distance from $\overline{x}$.
Note that in this paper, we assume that the curious nodes have full knowledge of the proposed protocol and are allowed to collaborate arbitrarily among themselves (exchanging information as necessary), but do not interfere in the computation of the average value in any other way. 

\section{Proposed Strategy and Main Results}\label{sec:Proposed_Strategy_Main_Results}

The main contribution of this paper includes a deterministic privacy-preserving gossip protocol that also guarantees distributed stopping. The protocol is deterministic in the sense that, at its completion, it is guaranteed that all values are close to $\overline{x}$ and the nodes that follow the privacy-preserving protocol have protected their initial values from exposure, as long as a simple sufficient condition is satisfied.

\subsection{Deterministic Privacy-Preserving Protocol}\label{Priv_prot}
The first objective of the protocol is to calculate the average of the initial values of the nodes in the network while preserving the privacy of the nodes that follow the proposed protocol. The scheme that we study makes use of the iteration in \eqref{EQgossip} where, at each iteration, a randomly selected pair of nodes perform local averaging of their values and asymptotically reach approximate consensus to the average of their initial values $\overline x$, at least as long as each pair of nodes is selected enough times. What we propose in this work is a deterministic privacy-preserving protocol in which node $i$ following the protocol sets its initial value to
 \begin{eqnarray}\label{initial_injection}
x'_i[0]=x_i+u_i
\end{eqnarray}
where $u_i$ is some random offset. A node $l$ that is not following the protocol simply sets $x'_l[0]=x_l$. Subsequently, when nodes $i$ and $j$ are selected to perform local averaging (say at iteration $k$), node $i$ updates its value as
 \begin{eqnarray}\label{iteration_injection}
 x_i'[k+1]=\big(x_i'[k]+x_j'[k]\big)/2 + u_i[k],\ k=0,1,...,
\end{eqnarray}
where $u_i[k]$ is a pseudo-random value chosen by node $i$ at iteration $k$. Node $j$ does something similar (adding a pseudo-random offset $u_j[k]$) if it is following the protocol; if node $j$ is not following the protocol, then node $j$ sets
 \begin{eqnarray} \label{iteration_noinjection}
 x_j'[k+1]=\big(x_i'[k]+x_j'[k]\big)/2,\ k=0,1,... .
\end{eqnarray}

The constraints are (i) $u_i[k]=0$ for $k>L_i$ where $L_i$ is the time step where node $i$ is first selected after it identifies that it has exchanged information with each \textit{node} in the network at least once, and (ii) \begin{eqnarray}u_i[L_i]=-\sum_{t=0}^{L_i-1}u_i[t]-u_i.\end{eqnarray}
Note that at time-step $L_i$, node $i$ effectively cancels-out the pseudo-random values it has added during the information exchange in the network up to that point; thus, from that point onward the remaining accumulated offset (due to node $i's$ offset values) in the system is zero. Note that for notational convenience we take $u_i[k]=0$ for all time steps $k$ at which node $i$ is not selected to update its value.



\textit{Protocol Description:} We now describe the distributed protocol from the perspective of node $i$. When selected, nodes following the protocol run the linear iteration in \eqref{iteration_injection} or \eqref{iteration_noinjection} (depending on whether they follow the protocol or not). Each time a link is activated, each node among the selected  pair of nodes is aware of the identity of the other node involved; thus, nodes are able to identify when they have communicated at least once with all the nodes in the network. We refer to $L_i$ as the iteration ($k=L_i$) at which node $i$ is first selected after it has communicated at least once with all other nodes. A node not following the protocol sets $u_i=0$ and $u_i[k]=0$ for $k=0, 1,2,...,$ which is the standard gossip protocol for reaching average consensus. Node $i$ that follows the privacy-preserving protocol executes \eqref{iteration_injection}, with $x'_i[0]=x_i+u_i$ and
\begin{itemize}
   \item[i)] Chooses a pseudo random offset $u_i[k], \ k=0,1,...,L_i-1$ for integer $L_i$, ($u_i[t]=0$ for iteration steps $t$ for which node $i$ is not selected);
   \item [ii)]Sets \begin{eqnarray} \label{EQoffset} u_i[L_i]=-\sum_{t=0}^{L{_i}-1}u_i[t]-u_i;\end{eqnarray}
   \item [iii)]Sets $u_i[k]=0$ for $k > L_i$.
 \end{itemize}
Note that $L_i$ is an integer number of steps that becomes known to each node \textit{i} following the protocol. This number is a random variable since the pairs are selected randomly at each iteration.
\par $Lemma~1$: Following the iteration in \eqref{aver_iteration} and in combination with the constraint in \eqref{EQoffset} the network will reach asymptotic average consensus.

\begin{proof}
 It is not hard to see that, if we let $L_{max}=max_i\{L_i\}$, then    \begin{eqnarray}\sum_{i=1}^N x'_i[L_{max}+1]=\sum_{i=1}^N x_i;\nonumber\end{eqnarray}  then, using the fact that from this point onward nodes follow a standard gossip protocol, we conclude that
\begin{eqnarray}
 \lim_{k \to \infty} x'_i[k]=\frac{1}{ N} {\sum_{i=1}^N x'_i[L_{max}+1]}=\frac{1}{ N} {\sum_{i=1}^N x_i}=\overline{x},\nonumber
 \end{eqnarray}for all~$i\in X$ (i.e., average consensus has been reached).\end{proof}

\begin{algorithm}[h!]
1:~Consider a fully connected graph $G=\{X,E\}$ with $N=|X|$ nodes. Each node $i$ has initial value as $x_i$.\\
2:~If node $i$ is following the protocol, it sets its initial value to $x'_i[0]=x_i+u_i$ (where $u_i$ is some random offset); otherwise, $x_i'[0]=x_i$.\\
3:~Let $L_i$ be the time step at which node $i$ is next activated after it communicates at least once with all the other nodes in the network). During the first $L_i-1$ steps, if node $i$ is selected, it adds a random offset value to its transmitted value. At time-step $L_i$, node $i$ cancels out the injected noise in the system by setting $u_i[L_i]=-\sum_{t=0}^{L{_i-1}}u_i[t]-u_i$, and for $k>L_i,$ it sets $u_i[k]=0$.\\
4:~Specifically, for $k=0,1,2,...,$ each node $i \in X$, if selected with node $j$, does the following ($u_i[k]=0$ for all $k$ if $i$ is not following the protocol)\\ 
5:~\textbf{if $k< L_i$ then:}~$x_i'[k+1]=\frac{(x_i'[k]+x_j'[k])}{2}+u_i[k]$\\
6:~\textbf{if $k=L_i$ then:}~$x_i'[k+1]=\frac{(x_i'[k]+x_j'[k])}{2}-\sum_{t=0}^{L{_i-1}}u_i[t]-u_i$\\
7:~\textbf{if $k>L_i$ then:}~$x_i'[k+1]=\frac{x_i'[k]+x_j'[k]}{2}$\\
8:~End
	\caption{Privacy-Preserving Gossip Protocol} 
\label{algorithm:1}
\end{algorithm}
 
\subsection{Deterministic Distributed Stopping Protocol}\label{stop_prot}
In the proposed deterministic distributed stopping algorithm, each node makes a decision on how to update and/or transmit its value, based on the difference between its calculated value and the value
it receives from its selected neighbor at each iteration that the node is activated. In the algorithm, the iterative process ends when all nodes cease transmitting their values, in which case they can be shown to have reached approximate average consensus, i.e., the absolute difference between the final value of each node and the exact average $\overline{x}$ of the initial values is smaller than an error bound $\epsilon$ (small real value).
\\ In terms of the definition below we are interested in reaching $\varepsilon$-approximate average consensus and also in identifying (in a distributed manner) when such approximate average consensus has been reached.
\par $Definition~2~(\varepsilon-Approximate~Average~Consensus):$ At the end of the iterative process, the nodes have reached $\varepsilon$-approximate average consensus if the value $x_i[f]$ of each node $i \in X$ satisfies $| x_i[f]- \bar x | \leq \varepsilon \; ,  \; \forall i \in X$, where $\bar{x}$ is the average of the initial values.

\textit{Protocol Description:} We now describe the distributed stopping protocol from the perspective of node $i$. Nodes following the protocol run the linear iteration in \eqref{iteration_injection} and \eqref{iteration_noinjection},  depending on whether or not they  run the privacy-preserving protocol, in order to reach average consensus. Nodes running the privacy-preserving protocol wait until they communicate at least once with each other node and then start executing the distributed stopping protocol below. The protocol has two phases. During phase 1, each time node $i$ is activated (say with node $j$) checks whether the absolute difference $|x_i[k]-x_j[k]| < \varepsilon$; if the condition is satisfied then the flags $f_{ji}$ and $f_{ij}$  that represent the closeness of the values between the two nodes become $1$; if the condition is not satisfied then the nodes proceed to perform local averaging of their values and set all their flags to zero. Once node $i$ has all its flags $\{f_{ji}\mid j\in X\}-\{f_{ii}\}$ equal to $1$, it enters the second phase. In this phase, node $i$ does not initiate any exchanges but, if it is selected by another node $j$, it checks whether the absolute difference $|x_i[k]-x_j[k]| > \varepsilon$ holds; in such case, all the flags of node $i$ become $0$ and node $i$ goes back to phase 1. 

\begin{algorithm}[h!]
\caption{Distributed Stopping for Average Consensus}
\begin{algorithmic}[1]
\\
Input: Each node $i$ initializes $x_i[0]=x_i$ and if node $i$ follows the privacy protocol, then it starts to implement the distributed stopping protocol described below, {\em after} time step $L_i$. 
\\
Initialize all flags $\{f_{ji}\mid j\in X-\{i\}\}$ to zero.\\
For each  $k>L_i$ when a pair of nodes $i$ and $j$ is randomly selected, nodes $i$ and $j$ do the following.\\
~~If $\mid x_i[k]-x_j[k]\mid<\varepsilon$\\
~~~~$Flags_{ij}=Flags_{ji}=1$  \\
~~else\\
~~~~$x_i[k+1]=x_j[k+1]=\frac{x_i[k]+x_j[k]}{2}$\\
~~~~$Flags_{*i}=Flags_{*j}=0$~~~~~(set all flags to zero)\\
~~End\\
~If all $f_{ji}=1$ node $i$ does not initiate any gossip interactions (unless contacted by another node $j$).\\
~~End\\
End
\end{algorithmic}
\label{algorithm:2}
\end{algorithm}

\begin{theorem}\label{theorem_convergence}
Consider a network described by a fully connected undirected graph $G=\{X,E\}$, where nodes run the enhanced gossip algorithm in order to reach average consensus. Following the deterministic distributed stopping protocol, nodes reach $\varepsilon$-approximate average consensus after a finite number of iterations.
\end{theorem}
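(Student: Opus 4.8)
The plan is to reduce the analysis to the post-privacy phase and then track a single quadratic disagreement potential. By Lemma~1, once $k>L_{\max}=\max_i L_i$ all injected offsets have cancelled, so the running sum equals $\sum_i x_i=N\bar x$ and the dynamics reduces to the thresholded (``lazy'') gossip of Algorithm~\ref{algorithm:2}: a selected pair either does nothing but set its flags (when $|x_i[k]-x_j[k]|<\varepsilon$) or performs an ordinary pairwise average and clears its flags (when the two values are at least $\varepsilon$ apart). Since both operations preserve the total sum, the empirical mean stays equal to $\bar x$ for all $k>L_{\max}$, a fact I will use at the very end to pass from pairwise closeness to closeness to $\bar x$.

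Define the disagreement potential $\Phi[k]=\sum_{i\in X}\big(x_i[k]-\bar x\big)^2$. First I would show $\Phi$ is non-increasing: a ``no-average'' step leaves every value unchanged, and a single averaging step of a pair $(i,j)$ replaces $x_i,x_j$ by their mean and changes $\Phi$ by exactly $-\tfrac12\big(x_i[k]-x_j[k]\big)^2$. Because an averaging step is executed only when $|x_i[k]-x_j[k]|\ge\varepsilon$, every such step decreases $\Phi$ by at least $\varepsilon^2/2$. Since $\Phi\ge 0$ and $\Phi[L_{\max}+1]$ is finite, at most $2\,\Phi[L_{\max}+1]/\varepsilon^2$ averaging steps can ever occur; let $k^\ast$ denote the finite index of the last one, so that no value changes for any $k>k^\ast$.

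The second ingredient is the distributed detection of termination through the flags, and this is where I expect the real work to be. The key invariant I would establish is: \emph{if at some time $k$ both $f_{ij}[k]=1$ and $f_{ji}[k]=1$, then $|x_i[k]-x_j[k]|<\varepsilon$.} This holds because the two flags are set only simultaneously, at an instant when the pair was selected and found $\varepsilon$-close, whereas $f_{ji}$ (resp.\ $f_{ij}$) is cleared whenever $i$ (resp.\ $j$) performs an average; hence if both equal $1$ at time $k$, neither endpoint has averaged since the last joint setting, so their values are frozen at the $\varepsilon$-close values recorded then. Consequently, when \emph{every} node has entered phase~2 (all of its flags equal $1$), every pair $(i,j)$ satisfies $f_{ij}=f_{ji}=1$ and is therefore within $\varepsilon$; since $\bar x$ is the current mean it lies between $\min_i x_i$ and $\max_i x_i$, so $\max_i x_i-\min_i x_i<\varepsilon$ gives $|x_i-\bar x|<\varepsilon$ for all $i$, i.e.\ $\varepsilon$-approximate average consensus in the sense of Definition~2.

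It remains to argue that the all-phase-2 state is reached in finitely many iterations. For $k>k^\ast$ no averaging occurs, so every selected pair passes the closeness test and flags are only set, never cleared; thus the flags are monotone after $k^\ast$. A node still in phase~1 keeps initiating, and by the fairness of random gossip on the fully connected graph (each awake node reaches every other node in finite time, almost surely) it sets all of its flags to $1$ and enters phase~2, while a phase-2 node never reverts, since reversion requires a far contact that cannot happen after $k^\ast$. Hence all nodes enter phase~2 after finitely many iterations, at which point no node initiates and the protocol halts. The main obstacle is precisely this interplay between the flag bookkeeping and the fact that phase-2 nodes stop initiating: I must verify that it does not deadlock (it does not, because a phase-1 node may always initiate toward a phase-2 node and such a close contact still sets the initiator's flag) and that stale flags cannot trigger an incorrect early stop (ruled out by the pair invariant above).
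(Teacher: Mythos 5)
Your argument is correct and reaches the theorem's conclusion, but it does so by a genuinely different route than the paper. The paper's proof of finite termination is by contradiction: assuming the process never stops, it considers the set $A$ of edges that are active infinitely often, partitions $\{X,A\}$ into connected components, invokes asymptotic convergence of standard gossip within each component to conclude that the values there eventually fall within $\varepsilon$ of one another, and derives a contradiction with those edges remaining active; it then imports an ``$\varepsilon$-local to $D\varepsilon$-global'' result (Proposition~2 of the cited Allerton paper, with $D=1$ here) to get pairwise closeness. You instead run a direct, quantitative argument: the quadratic potential $\Phi[k]=\sum_i (x_i[k]-\bar x)^2$ drops by exactly $\tfrac12(x_i-x_j)^2\ge \varepsilon^2/2$ at every executed averaging step, so at most $2\Phi[L_{\max}+1]/\varepsilon^2$ such steps can occur, which immediately yields a last averaging time $k^\ast$ without any contradiction or appeal to asymptotic gossip convergence; you then replace the cited local-to-global proposition with an explicit flag invariant ($f_{ij}=f_{ji}=1$ implies the recorded $\varepsilon$-closeness is still valid because any intervening average would have cleared the flags). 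Both proofs finish identically, sandwiching $\bar x$ between $x_{\min}[f]$ and $x_{\max}[f]$ and using sum preservation after $L_{\max}$. Your approach buys an explicit bound on the number of averaging events and a self-contained correctness argument for the flag mechanism (which the paper leaves implicit), at the cost of being specific to pairwise averaging, whereas the paper's connected-component argument is the one that would generalize more readily to non-complete topologies. The only points you share with the paper and leave equally informal are the almost-sure finiteness of $L_{\max}$ and the fairness of the random pair selection, both standard for random gossip on a complete graph.
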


\begin{proof}
Due to space limitations, we only provide a sketch of the proof. We can first establish, by contradiction, that the nodes will stop after a finite number of iterations $f$. Suppose that the iteration runs forever; this means that at each iteration $k$ at least two neighboring nodes are active (transmitting nodes).  Let $A=\{(i_1,j_1), (j_1,i_1), (i_2,j_2), (j_2,i_2), \ldots, (i_\kappa, j_\kappa), (j_\kappa, i_\kappa)\}\subseteq E$ be the {\em non-empty} set of edges that are active infinitely often. Also, let $\ell$ be the latest iteration at which an edge in the set $E-A$ is active (note that $\ell$ is finite). We have that the active edge at iteration $k>l$ denoted by $e[k]$ satisfies $e[k]\in A$ (i.e., after time step $\ell$ each active edge belongs in the set $A$).
Since each $e[k]$ results in an undirected graph $G[k] = \{ X, \{e[k\} \}$, the graph $\{ X, A \}$ is also an undirected graph and can be partitioned into connected components. Then, from \cite{west}, it follows that the nodes in each connected component asymptotically reach the same value. This implies that the edges in the connected component would cease to be active, which is a contradiction.
Since the nodes stop at some iteration $f$, we have $\varepsilon$-approximate local consensus \cite{Allerton13_Manitara}. Using  Proposition~2 from \cite{Allerton13_Manitara}, we also have $(D \times \varepsilon)$-approximate global consensus, and since we have a fully connected graph $(D=1)$, the absolute difference between any pair of node values after they stop transmitting is at most $\varepsilon$. Moreover, we have
$$
x_{\min}[f] \leq
\frac{1}{N}\sum_{l=1}^{N} x_l[f]
\leq x_{\max}[f]
$$
(where $x_{\min}[f] = \min_i \left \{x_i[f] \right \}$ and $x_{\max}[f] = \max_i \left \{x_i[f] \right \}$), which implies that
$$
x_{\min}[f] \leq \bar{x} = \frac{1}{N} \sum_{l=1}^N x_l=\frac{1}{N} \sum_{l=1}^N x_l[f] \leq x_{\max}[f] \; .
$$
Since $\mid x_{max}[f]-x_{min}[f] \mid < \varepsilon$ this completes the proof. 
\end{proof}

\subsection{Conditions for Privacy-Preservation}

\begin{theorem}\label{privacy_theorem}
Let us consider a fully connected graph $G=\{X,E\}$ of $N$ nodes. 
The set of private nodes $X_p$ follow the predefined privacy-preserving gossip strategy in \eqref{EQgossip} with random offsets chosen as in \eqref{iteration_injection} and \eqref{EQoffset}. 
The set of curious nodes $X_c$ will not be able to identify the initial value $x_j$ of node $j \in X_p$, as long as 
\begin{enumerate}
    \item there is at least one node $l \in X_p$ in the network that also follows the proposed privacy-preserving protocol; 
    \item there is at least one node $l \in X_n$ in the network which first exchanges information with node $j$ (at the first iteration it is involved in a selected pair). 
\end{enumerate}
\par Note that, if the conditions in Theorem~$2$ are also satisfied, the network will reach approximate average consensus in finite time, while the privacy of the initial values of the nodes following the protocol will be preserved during the process.
\end{theorem}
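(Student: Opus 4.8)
The plan is to establish privacy through an indistinguishability (alternative-execution) argument. I would define the \emph{view} of the colluding curious nodes $X_c$ as the collection of all values they receive during the entire run, i.e., every value $x'_m[k]$ that a node $m$ transmits to a curious node at an iteration $k$ in which the two form the selected pair. Since the curious nodes know the protocol but not the private offsets, the initial value $x_j$ of node $j\in X_p$ is \emph{not identifiable} precisely when there exists a second admissible assignment of initial values and offsets that is consistent with \eqref{iteration_injection}--\eqref{EQoffset}, assigns node $j$ a \emph{different} initial value $x_j+\delta$ for some $\delta\neq 0$, and yields exactly the same view. I would therefore aim to construct, for arbitrary $\delta$, such an alternative execution and argue that no strategy available to $X_c$ can separate it from the true one.

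First I would fix the second private node $l\in X_p$ guaranteed by condition~1 and set $\tilde{x}_j=x_j+\delta$, $\tilde{x}_l=x_l-\delta$, leaving every other initial value unchanged; this keeps the total $\sum_i x_i$, and hence the consensus value $\bar{x}$ that the curious nodes can infer, invariant. Because the dynamics are linear in the initial values and the offsets, the induced change in the view is linear in the offset perturbations, so the problem reduces to finding perturbations $\{\delta u_j,\delta u_j[k]\}$ and $\{\delta u_l,\delta u_l[k]\}$ that (i) respect the per-node zero-sum constraint \eqref{EQoffset} and the support restriction $u_i[k]=0$ for $k>L_i$, and (ii) annihilate the perturbation of every value observed by a curious node. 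Condition~2 is what makes the $j$-part solvable: since node $j$'s \emph{first} selected interaction is with a neutral node in $X_n$ rather than a curious one, the view contains no direct reading of $x'_j[0]=x_j+u_j$, so I can absorb the shift $\delta$ into $u_j$ and route the residual through that first, unobserved, averaging step, without ever creating an equation that pins $x_j$ to the view.

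The main obstacle is verifying part~(ii) in full, namely that the shift of $x_j$ can be made \emph{invisible} throughout the transcript and not merely in the final average. The difficulty is that the zero-sum constraint \eqref{EQoffset} forbids holding node $j$'s entire trajectory fixed: changing $x_j$ while preserving $x'_j[0]$ necessarily displaces the cancelling offset $u_j[L_j]$, so a residual $+\delta$ appears after step $L_j$ and must be shown to cancel against the opposite residual $-\delta$ produced by node $l$ before either reaches a curious node. I would handle this by tracking the two residuals through the gossip sequence and exploiting the remaining freedom in $\{u_j[k]\}_{k<L_j}$ and $\{u_l[k]\}_{k<L_l}$ (nonempty because, by the definition of $L_i$, each private node is selected together with every other node at least once) to steer and annul them along the observed edges. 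Once the offset perturbations are shown to exist, the two executions produce identical views, so $X_c$ cannot distinguish $x_j$ from $x_j+\delta$; as $\delta$ is arbitrary, $x_j$ is unidentifiable. Finally, since the offsets are constructed to cancel, $\bar{x}$ is preserved and Theorem~\ref{theorem_convergence} applies verbatim, yielding finite-time $\varepsilon$-approximate consensus alongside the privacy guarantee.
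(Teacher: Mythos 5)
Your indistinguishability framework is a legitimate and arguably more rigorous route than the paper's own proof, which instead walks directly through three adversary-deduction scenarios (all-curious; one neutral node meeting $j$ first; one additional private node) and argues that the coalition can only ever resolve the \emph{sum} of the unknown offsets entangled at the $j$--$l$ exchange, hence only $x_j+x_l$. However, your construction has a genuine gap exactly at the step you flag as the ``main obstacle,'' and the repair you propose would fail. You let the shift $\delta$ surface as a residual in $x'_j[L_j+1]$ (because displacing $u_j$ displaces the cancelling offset $u_j[L_j]$) and then hope to cancel it against node $l$'s residual $-\delta$ ``before either reaches a curious node.'' But for $k>L_j$ node $j$ injects no further offsets, so whatever residual sits in $x'_j[L_j+1]$ is transmitted verbatim at $j$'s very next activation, which may well be with a curious node; there is no freedom left after $L_j$ with which to ``steer'' anything, and nothing forces $j$ and $l$ to meet each other before meeting the coalition. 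The residual must therefore never be allowed to survive to step $L_j$ at all.

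The missing idea---which is the actual crux of the paper's scenario~3---is to localize the entire perturbation at the one exchange whose inputs the coalition cannot see: the $j$--$l$ gossip at some step $k'$, which is guaranteed to exist with $k'<\min\{L_j,L_l\}$ precisely because the definition of $L_i$ forces each private node to have met every node (in particular the other private node) before its cancellation step. Concretely, take $\Delta x_j=\delta$, $\Delta u_j=-\delta$, perturb by $+\delta$ the offset $j$ injects at its last activation \emph{before} $k'$, and symmetrically $\Delta x_l=-\delta$, $\Delta u_l=+\delta$, with $-\delta$ on $l$'s last offset before $k'$, leaving everything else untouched. Then $x'_j[0]$ and $x'_l[0]$ are unchanged, $x'_j[k']$ and $x'_l[k']$ change by $+\delta$ and $-\delta$ but are seen only by $l$ and $j$ respectively, the two shifts cancel in the average at $k'$ so every later transmitted value is unchanged, and both zero-sum constraints \eqref{EQoffset} still hold, so $u_j[L_j]$ and $u_l[L_l]$ need not move and no residual ever appears. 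Without identifying this hiding place, your ``track and annul'' step is an assertion rather than a proof. Two smaller points: the theorem's two conditions are \emph{alternative} sufficient conditions (the paper's scenarios 2 and 3), not a conjunction as your construction assumes; and in the neutral-node case the clean alternative execution is simply $\tilde x_j=x_j+\delta$, $\tilde x_l=x_l-\delta$ with \emph{all} offsets unchanged, since the first $j$--$l$ average absorbs the $\pm\delta$ before any curious node observes either value.
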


\begin{proof}
The proof of Theorem~\ref{privacy_theorem} analyzes the following simple scenarios. 
\begin{itemize}
\item[1)] If all nodes $i \in X - \{ j \}$ in the network are curious and they communicate with each other, it is not possible for $j$ to maintain the privacy of its initial value $x_j$. 
Specifically, at the initialization of our algorithm, the curious nodes will know the value $x_j + u_j$ (recall that $u_j$ is the value that $j$ injected to its initial value in \eqref{initial_injection}). 
During the iteration of our algorithm, curious nodes in $X_c$ will get to learn the values $u_j[k]$ for $k = 0, 1, ..., L_j$, that $j$ injects to $x_j'[k]$ (since curious nodes collude with each other and exchange information, they can compare each pair of consecutive values node $j$  announces to determine $u_j[0]$, then $u_j[1]$, and so forth up to $u_j[L_j]$). 
This means that the curious nodes $i \in X_c$ will be able to compute the total amount of injected offsets $u_j + \sum_{t = 0}^{L_j-1} u_j[t] = - u_j[L_j]$ and, from that, $u_j$ and therefore the initial value $x_j$ of node $j$. 
As a result, the privacy of node~$j$ is not preserved and at least one node (other than $j$) that is not curious is needed in the network. 
It is also worth highlighting that no privacy-preserving protocol can protect node $j$ in this case (since $\overline{x}$ becomes known and the curious nodes know their own value).
\item[2)] Let us consider the case for which there exists at least one node $l \in X_n$ in the network (i.e., node $l$ is neither curious nor following the privacy-preserving protocol). 
Also, $i \in X_c, \ \forall \ i \in X \setminus \{l, j\}$ (i.e., all other nodes in the network are curious). 
At the initialization of our algorithm, node $j$ will inject an initial offset to its value $x_j'[0] = x_j + u_j$ as in \eqref{initial_injection} while $l$ will maintain its value $x_l$, i.e.,  $x_l'[0] = x_l$. 
Let us assume that at the first time step $k=0$, $j$ and $l$ are the selected pair. 
Then, they will update their values as
\begin{eqnarray}\label{init_decompose1}
x_j'[1] = & \big(x_j'[0] + x_l'[0] \big)/2 + u_j[0], \ & \text{and} \nonumber \\
x_l'[1] = & \big(x_j'[0] + x_l'[0] \big)/2. & \;
\end{eqnarray}
When nodes $j$ and $l$ are paired with curious nodes at time steps $k=1, 2, ...$, their values $x_j'[1]$ and $x_l'[1]$ will become known (along with $(x_j'[0] + x_l'[0]) / 2$ and $u_j[0]$). 
Later on, curious nodes could determine $u_j[1]$, $u_j[2]$ and eventually $u_j[L_j] = -u_j - \sum_{t=1}^{L_j - 1} u_j[t]$. 
Thus, they can determine $u_j$ and compute the exact sum of the initial values of nodes $j$ and $l$ as $x_j + x_i = 2x_l'[1] - u_j$ (the value of $x_l'[1]$ becomes known at iteration $k=1$, whereas $u_j$ can be calculated after iteration $L_j$, once $u_j[L_j]$ is available). 
\item[3)] Let us consider the case for which there exists at least one node $l \in X_p$ in the network that also follows the proposed privacy-preserving protocol. 
Also, $i \in X_c, \ \forall \ \in X \setminus \{j, l\}$ (i.e., all other nodes in the network are curious). 
At the initialization of our algorithm, $j$ and $l$ will inject initial offsets to their values $x_j'[0] = x_j + u_j$ and $x_l'[0] = x_l + u_l$, respectively, as in \eqref{initial_injection}. 
During the iteration of our algorithm, each time $j$ or $l$ are paired with another curious node $i \in X_c$ at time steps $k=1, 2, ...$, they will inject offsets to their values $u_j[k]$, and $u_l[k]$ as 
\begin{eqnarray}\label{init_decompose2}
x_j'[k+1] = & \big(x_j'[k] + x_i'[k] \big)/2 + u_j[k], \ & \text{or} \nonumber \\
x_l'[k+1] = & \big(x_l'[k] + x_i'[k] \big)/2 + u_l[k]. & \;
\end{eqnarray}
Let us assume that at time step $k'$, nodes $j$ and $l$ are selected as a pair. 
Both nodes will inject offsets $u_j[k']$ and $u_l[k']$ to their values as
\begin{eqnarray}\label{init_decompose3}
x_j'[k'+1] = & \big(x_j'[k'] + x_l'[k'] \big)/2 + u_j[k'], \ & \text{and} \nonumber \\
x_l'[k'+1] = & \big(x_l'[k'] + x_j'[k'] \big)/2 + u_l[k']. & \;
\end{eqnarray}
The sum of values of $u_j[k'-1]$ and $u_l[k'-1]$, which were injected after the previous exchange of each node and are known only to nodes $j$ and $l$, additively affect both $x_j'[k'+1]$ and $x_i'[k'+1]$. 
As a result, the privacy of both nodes $j$ and $l$ is preserved (curious nodes can only resolve the sum of $u_j[k'-1] + u_l[k'-1]$, but not the individual values). 
As we mentioned in the previous scenario, since only nodes $j$ and $l$ are following the proposed protocol, curious nodes $i \in X_p$ will be able to compute the exact sum of the initial values $x_j + x_l$ of nodes $j$ and $l$. 
\end{itemize}
From the above scenarios, we have that node $j$ is able to preserve the privacy of its initial value if (i) there exists at least one node $l \in X_n$ in the network and node $l$ first exchanges information with node $j$ among them, or (ii) there exists at least one other node $l \in X_p$, $l \neq j$,  that also follows the proposed privacy-preserving protocol. 
In both cases, the privacy of the initial value of node $j \in X_p$ is preserved. \\
This completes the proof.
\end{proof}

\section{Illustrative Example}\label{sec:Example}

In this section, we present an example to illustrate the behavior and potential advantages of our proposed algorithm. 
We analyze the case of a  fully connected graph of $20$ nodes with the initial value $x_i$ for each of the nodes chosen independently and uniformly in $[0, 1]$. 
For this case, each node $i \in X$ executes Algorithm~\ref{algorithm:1} and Algorithm~\ref{algorithm:2}, i.e., it utilizes (i) the privacy-preserving protocol described in Section~\ref{Priv_prot}, and (ii) the distributed stopping protocol described in Section~\ref{stop_prot}. 
Specifically, during the operation of Algorithm~\ref{algorithm:1}, in order to ensure privacy-preservation, each node $i$ infuses pseudo random offsets $u_i[k]$ chosen uniformly in $[-1, 1]$, for $\ k=0,1,...,L_i-2$ where $L_i$ is the number of steps elapsed at the first activation of node $i$ after it has communicated at least once with all the other nodes in the network (note that $u_j[k]$ is zero when node $j$ is not selected). 
For distributed stopping guarantees, each node $i$ initially sets its flag to $0$ and takes the value of $\varepsilon=0.0001$.  

In Fig.~\ref{model3}, the average of the initial states of the nodes is equal to $\bar{x} = 0.3699$. 
We show the value of each node $i$ plotted against the required time steps for convergence. 
We observe that each node calculates the approximate average of the initial states after $2140$ time steps.
When the nodes stop, the error bound between the absolute difference of the value of any node in the system from the exact average value ${\overline x}$ is smaller than $\varepsilon$. 

\begin{figure}[t]
    \centering
    \includegraphics[width=\linewidth]{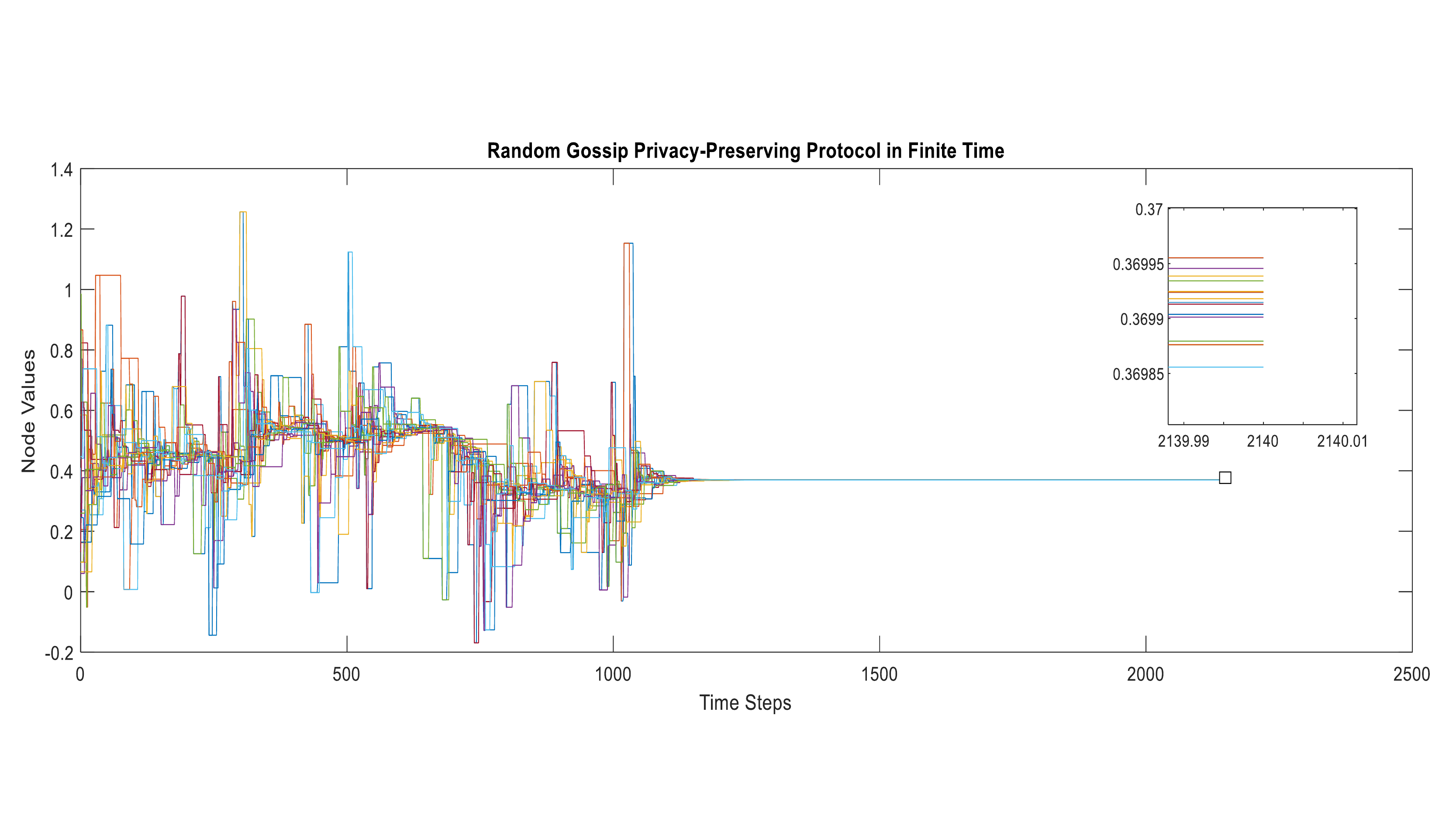}
\caption{Convergence of gossip-based privacy-preserving protocol in finite time.}
\label{model3}
\end {figure}


\addtolength{\textheight}{-3cm}   


\section{CONCLUSIONS AND FUTURE WORK}\label{sec:conlcusion}
In this paper, we have considered the problem of privacy-preserving average consensus in finite time in the presence of curious nodes. 
We presented a distributed privacy-preserving average consensus algorithm that enables all of the components of a distributed system, each with some initial value, to reach (approximate) average consensus on their initial values after executing a finite number of iterations. 
We proposed an enhancement of the standard gossip protocol that can be followed by each node that does not want to reveal its initial value to curious nodes with which it might interact. 
We characterized the conditions on the information exchange that guarantee privacy-preservation for a specific node. 
Furthermore, we also provided a criterion that allows the nodes to determine, in a distributed manner, when to terminate their operation because approximate average consensus has been reached, Finally, we presented simulation results of our proposed protocol.  

In the future we plan to extend the results of this paper to the case where the proposed protocol operates over arbitrary directed graphs (not necessarily fully connected). 
Furthermore, achieving distributed stopping for finite time average consensus with privacy guarantees is to be investigated in our future work.


\bibliographystyle{IEEEtran}        
\bibliography{bibliografia_consensus}           

\begin{thebibliography}{10}
\providecommand{\url}[1]{#1}
\csname url@samestyle\endcsname
\providecommand{\newblock}{\relax}
\providecommand{\bibinfo}[2]{#2}
\providecommand{\BIBentrySTDinterwordspacing}{\spaceskip=0pt\relax}
\providecommand{\BIBentryALTinterwordstretchfactor}{4}
\providecommand{\BIBentryALTinterwordspacing}{\spaceskip=\fontdimen2\font plus
\BIBentryALTinterwordstretchfactor\fontdimen3\font minus
  \fontdimen4\font\relax}
\providecommand{\BIBforeignlanguage}[2]{{%
\expandafter\ifx\csname l@#1\endcsname\relax
\typeout{** WARNING: IEEEtran.bst: No hyphenation pattern has been}%
\typeout{** loaded for the language `#1'. Using the pattern for}%
\typeout{** the default language instead.}%
\else
\language=\csname l@#1\endcsname
\fi
#2}}
\providecommand{\BIBdecl}{\relax}
\BIBdecl

\bibitem{2011:Christoforos_Garcia}
A.~D. {Dom\'{i}nguez-Garc\'{i}a} and C.~N. Hadjicostis, ``Distributed
  algorithms for control of demand response and distributed energy resources,''
  in \emph{Proceedings of $50^{th}$ IEEE Conference on Decision and Control and
  European Control Conference}, 2011, pp. 27--32.

\bibitem{2005:Ren_Beard_Atkins}
W.~Ren, R.~W. Beard, and E.~M. Atkins, ``A survey of consensus problems in
  multi-agent coordination,'' in \emph{Proceedings of American Control
  Conference}, 2005, pp. 1859--1864.

\bibitem{1996:Lynch}
N.~Lynch, \emph{Distributed Algorithms}.\hskip 1em plus 0.5em minus 0.4em\relax
  San Mateo: CA: Morgan Kaufmann Publishers, 1996.

\bibitem{2003:Koetter}
R.~Koetter and M.~M\'edard, ``An algebraic approach to network coding,''
  \emph{IEEE/ACM Transactions on Networking}, vol.~11, no.~5, pp. 782--795,
  October 2003.

\bibitem{2005:Hromkovic}
J.~Hromkovic, R.~Klasing, A.~Pelc, P.~Ruzicka, and W.~Unger,
  \emph{Dissemination of Information in Communication Networks},
  Springer-Verlag, 2005.

\bibitem{2008:Cortes}
J.~Cort\'{e}s, ``Distributed algorithms for reaching consensus on general
  functions,'' \emph{Automatica}, vol.~44, no.~3, pp. 726--737, March 2008.

\bibitem{2008:Sundaram_Hadjicostis}
S.~Sundaram and C.~N. Hadjicostis, ``Distributed function calculation and
  consensus using linear iterative strategies,'' \emph{IEEE Journal on Selected
  Areas in Communications}, vol.~26, no.~4, pp. 650--660, May 2008.

\bibitem{2018:BOOK_Hadj}
C.~N. Hadjicostis, A.~D. {Dom\'{i}nguez-Garc\'{i}a}, and T.~Charalambous,
  ``Distributed averaging and balancing in network systems, with applications
  to coordination and control,'' \emph{Foundations and Trends\textregistered
  ~in Systems and Control}, vol.~5, no. 3--4, 2018.

\bibitem{2010:Dimakis_Rabbat}
A.~G. Dimakis, S.~Kar, J.~M.~F. Moura, M.~G. Rabbat, and A.~Scaglione, ``Gossip
  algorithms for distributed signal processing,'' \emph{Proceedings of the
  IEEE}, vol.~98, no.~11, pp. 1847--1864, November 2010.

\bibitem{Kefayati:2007}
M.~{Kefayati}, M.~S. {Talebi}, B.~H. {Khalaj}, and H.~R. {Rabiee}, ``Secure
  consensus averaging in sensor networks using random offsets,'' in \emph{IEEE
  International Conference on Telecommunications and Malaysia International
  Conference on Communications}, May 2007, pp. 556--560.

\bibitem{Nozari_2017}
E.~Nozari, P.~Tallapragada, and J.~Cortes, ``Differentially private average
  consensus: Obstructions, trade-offs, and optimal algorithm design,''
  \emph{Automatica}, vol.~81, pp. 221--231, 2017.

\bibitem{2016:CortesPappas}
J.~Cort\'{e}s, G.~E. Dullerud, S.~Han, J.~L. Ny, S.~Mitra, and G.~J. Pappas,
  ``Differential privacy in control and network systems,'' in \emph{$55^{th}$
  IEEE Conference on Decision and Control}, 2016, pp. 4252--4272.

\bibitem{2013:Nikolas_Hadj}
N.~Manitara and C.~N. Hadjicostis, ``Privacy-preserving asymptotic average
  consensus,'' in \emph{European Control Conference}, Zurich, 2013, pp.
  760--765.

\bibitem{Mo-Murray:2017}
Y.~{Mo} and R.~M. {Murray}, ``Privacy preserving average consensus,''
  \emph{IEEE Transactions on Automatic Control}, vol.~62, no.~2, pp. 753--765,
  2017.

\bibitem{west}
D.~B. West, \emph{Introduction to Graph Theory}.\hskip 1em plus 0.5em minus
  0.4em\relax Upper Saddle River, New Jersey, 2001.

\bibitem{2020:RikosThemisJohHadj_CDC}
A.~I. Rikos, T.~Charalambous, K.~H. Johansson, and C.~N. Hadjicostis,
  ``Privacy-preserving event-triggered quantized average consensus,'' in
  \emph{Proceedings of $59^{th}$ IEEE Conference on Decision and Control},
  2020, pp. 6246--6253.

\bibitem{2021:Priv_Rikos_Hadj_Joh_Themis_TCNS}
------, ``Distributed event-triggered algorithms for finite-time
  privacy-preserving quantized average consensus,'' \emph{arXiv preprint
  arXiv:2102.06778}, 2021.

\bibitem{2019:Cascudo_Christensen}
Q.~Li, I.~Cascudo, and M.~G. Christensen, ``Privacy-preserving distributed
  average consensus based on additive secret sharing,'' in \emph{European
  Signal Processing Conferenec (EUSIPCO)}, 2019, pp. 1--5.

\bibitem{Hadjicostis_Garcia_2020}
C.~N. Hadjicostis and A.~D. Dominguez-Garcia, ``Privacy-preserving distributed
  averaging via homomorphically encrypted ratio consensus,'' \emph{IEEE
  Transactions on Automatic Control}, vol.~65, no.~9, pp. 3887--3894, 2020.

\bibitem{2019:Ruan_Wang}
M.~Ruan, H.~Gao, and Y.~Wang, ``Secure and privacy-preserving consensus,''
  \emph{IEEE Transactions on Automatic Control}, vol.~64, no.~10, pp.
  4035--4049, 2019.

\bibitem{Hadjicostis_2018}
C.~N. Hadjicostis, ``Privary preserving distributed average consensus via
  homomorphic encryption,'' in \emph{Proceedings of $57^{th}$ IEEE Conference
  on Decision and Control (CDC)}, 2018, pp. 1258--1263.

\bibitem{2019:Charalambous_Manitara_Hadjicostis}
T.~Charalambous, N.~E. Manitara, and C.~N. Hadjicostis, ``Privacy-preserving
  average consensus over digraphs in the presence of time delays,'' in
  \emph{Proceedings of $57^{th}$ Annual Allerton Conference on Communication,
  Control, and Computing}, 2019, pp. 238--245.

\bibitem{2004:XiaoBoyd}
L.~Xiao and S.~Boyd, ``Fast linear iterations for distributed averaging,''
  \emph{Systems and Control Letters}, vol.~53, no.~1, pp. 65--78, September
  2004.

\bibitem{Allerton13_Manitara}
N.~E. Manitara and C.~N. Hadjicostis, ``Distributed stopping in average
  consensus via event-triggered strategies,'' in \emph{Proceedings of $51^{st}$
  Annual Allerton Conference on Communication, Control, and Computing}, 2013,
  pp. 1336--1343.

\end{thebibliography}

\end{document}